\newtheorem{thm}{Theorem}
\newtheorem{cor}{Corollary}
\newtheorem*{rem}{Remark}
\journal{Information Processing Letters}
\begin{document}

\begin{frontmatter}




\title{Amortized Rotation Cost in AVL Trees}

\author[auth1]{Mahdi Amani} 
\author[auth2]{Kevin A. Lai\corref{cor1}}
\ead{kalai@princeton.edu}
\author[auth2,auth3]{Robert E. Tarjan}
\cortext[cor1]{Corresponding author}

\address[auth1]{Universit\`a di Pisa, Dipartimento di Informatica}
\address[auth2]{Princeton University, Computer Science Department}
\address[auth3]{Intertrust Technologies, Sunnyvale, CA}



\begin{keyword}
data structures \sep AVL tree\sep lower bound \sep rotation




\end{keyword}

\end{frontmatter}

\section{Introduction}
An AVL tree \cite{adelson} is the original type of balanced binary search tree.  An insertion in an $n$-node AVL tree takes at most two rotations, but a deletion in an $n$-node AVL tree can take $\Theta(\log n)$.  A natural question is whether deletions can take many rotations not only in the worst case but in the amortized case as well.  A sequence of $n$ successive deletions in an $n$-node tree takes $O(n)$ rotations \cite{tsakalidis}, but what happens when insertions are intermixed with deletions?  

Heaupler, Sen, and Tarjan \cite{haeupler} conjectured that alternating insertions and deletions in an $n$-node AVL tree can cause each deletion to do $\Omega(\log n)$ rotations, but they provided no construction to justify their claim.  We provide such a construction: we show that, for infinitely many $n$, there is a set $E$ of {\it expensive} $n$-node AVL trees with the property that, given any tree in $E$, deleting a certain leaf and then reinserting it produces a tree in $E$, with the deletion having done $\Theta(\log n)$ rotations. One can do an arbitrary number of such expensive deletion-insertion pairs. The difficulty in obtaining such a construction is that in general the tree produced by an expensive deletion-insertion pair is not the original tree. Indeed, if the trees in $E$ have even height $k$, $2^{k/2}$ deletion-insertion pairs are required to reproduce the original tree.

\section{Definition and Rebalancing of AVL Trees}
To define AVL trees, we use the rank-balance framework of Haeupler, Sen, and Tarjan \cite{haeupler}.  Although this gives a non-standard definition of AVL trees, it is equivalent to the original, and it is easier to work with.  A node in a binary tree is {\it binary}, {\it unary}, or a {\it leaf} if it has two, one, or no children, respectively.  A unary node or leaf has one or two {\it missing children}, respectively.  A {\it ranked binary tree} is a binary tree in which each node $x$ has a non-negative integer {\it rank} $x.r$.  By convention, a missing node has rank $-1$. The rank of a ranked binary tree is the rank of its root.  We denote the parent of a node $x$ by $x.p$.  The {\it rank difference} of a child $x$ is $x.p.r - x.r$.  A child of rank difference $i$ is an $i$-child; a node whose children have rank differences $i$ and $j$ with $i \le j$ is an {\it $i,j$ node}.  

An {\it AVL tree} is a ranked binary tree satisfying the following the {\it rank rule}: every node is 1,1 or 1,2.  Since missing nodes have rank $-1$, every leaf in an AVL tree is 1,1 and has rank 0, and every unary node is 1,2 and has rank 1.  Since all rank differences are positive, leaves have rank 0, and every node has a child of rank difference 1, we see that the rank of a node in an AVL tree equals its height.

Insertions and deletions in AVL trees can violate the rank rule.  We restore the rank rule by changing the ranks of certain nodes and doing rotations, which change the tree structure locally while preserving the symmetric order of nodes.  Figure \ref{rot} illustrates a rotation.

\begin{figure}[!hbtp]
\centering
\includegraphics[width=5in]{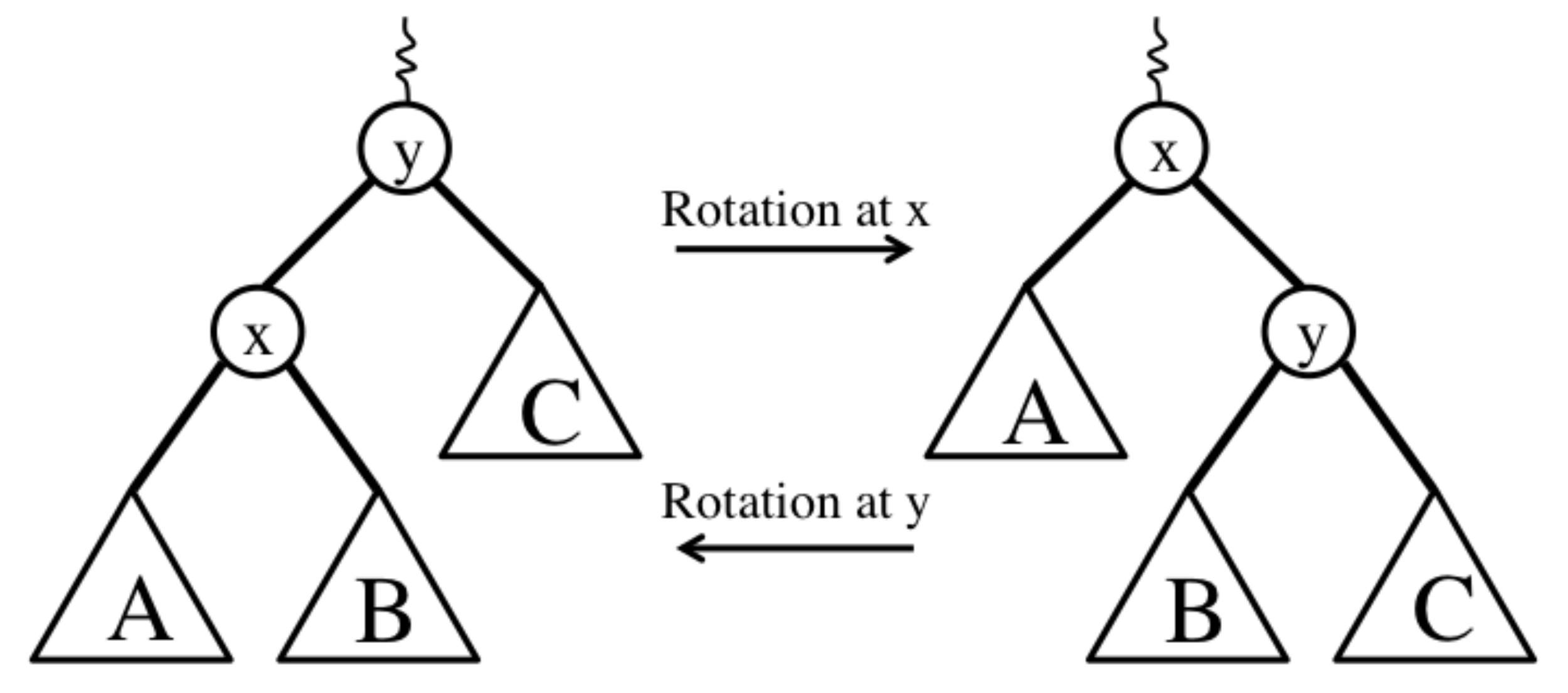}
\caption{Right rotation at node $x$. Triangles denote subtrees. The inverse operation is a left rotation at $y$.}
\label{rot}
\end{figure}

AVL trees grow by leaf insertions and shrink by deletions of leaves and unary nodes.  To add a leaf to an AVL tree, replace a missing node by the new leaf and give the new leaf a rank of 0.  If the parent of the new leaf was itself a leaf, it is now a 0,1 (unary) node, violating the rank rule.  In this case, rebalance the tree by repeatedly applying the appropriate case in Figure \ref{ins} until the rank rule holds. 

\begin{figure}[!hbtp]
\centering

\begin{subfigure}{\textwidth}
\centering
\includegraphics[width=5in]{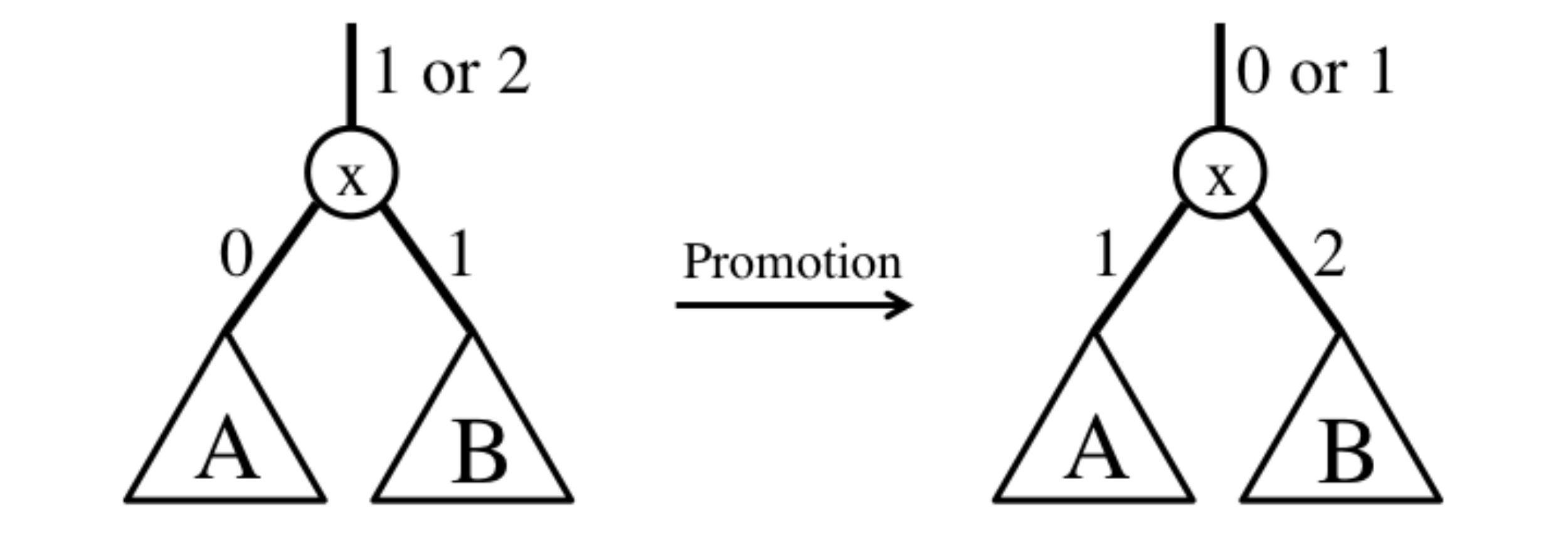}
\caption{Promotion to rebalance after insertion}
\label{ins1}
\end{subfigure}

\begin{subfigure}{\textwidth}
\centering
\includegraphics[width=5in]{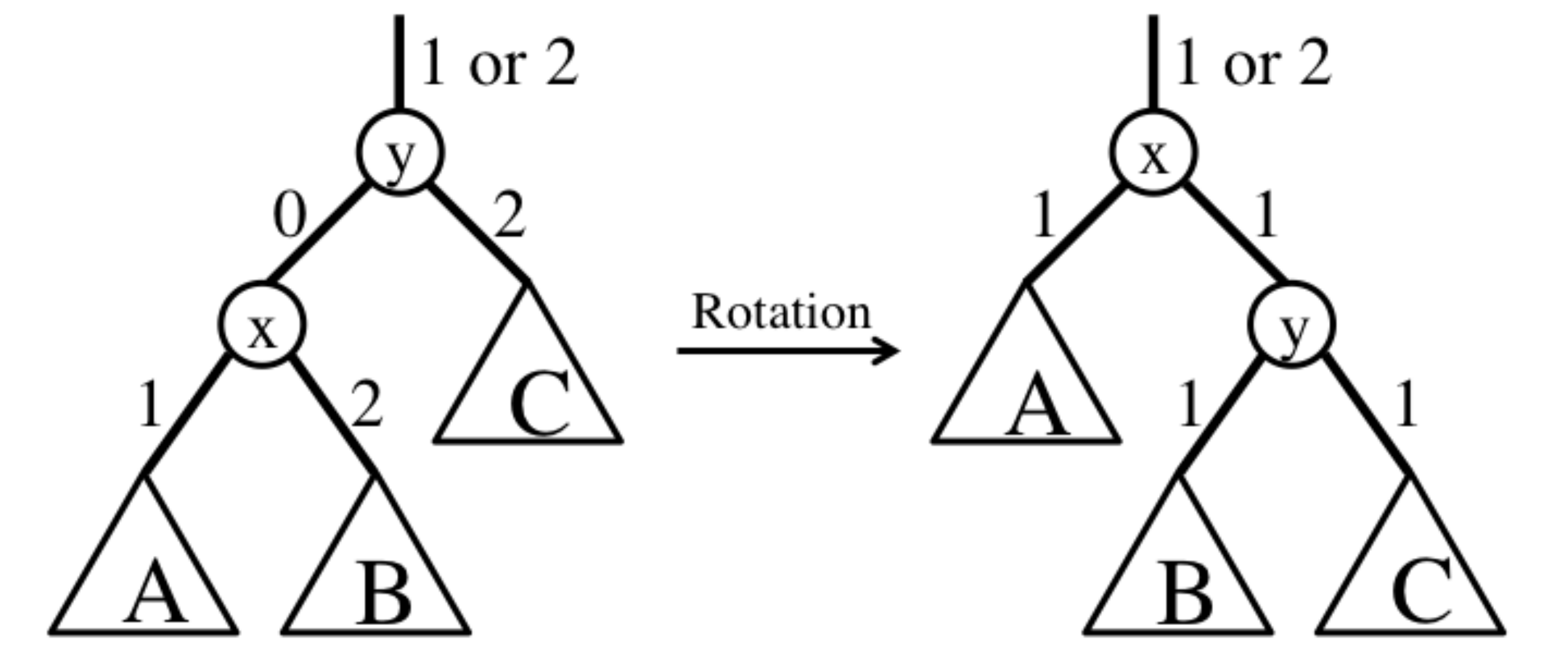}
\caption{Single rotation to rebalance after insertion}
\label{ins2}
\end{subfigure}

\begin{subfigure}{\textwidth}
\centering
\includegraphics[width=5in]{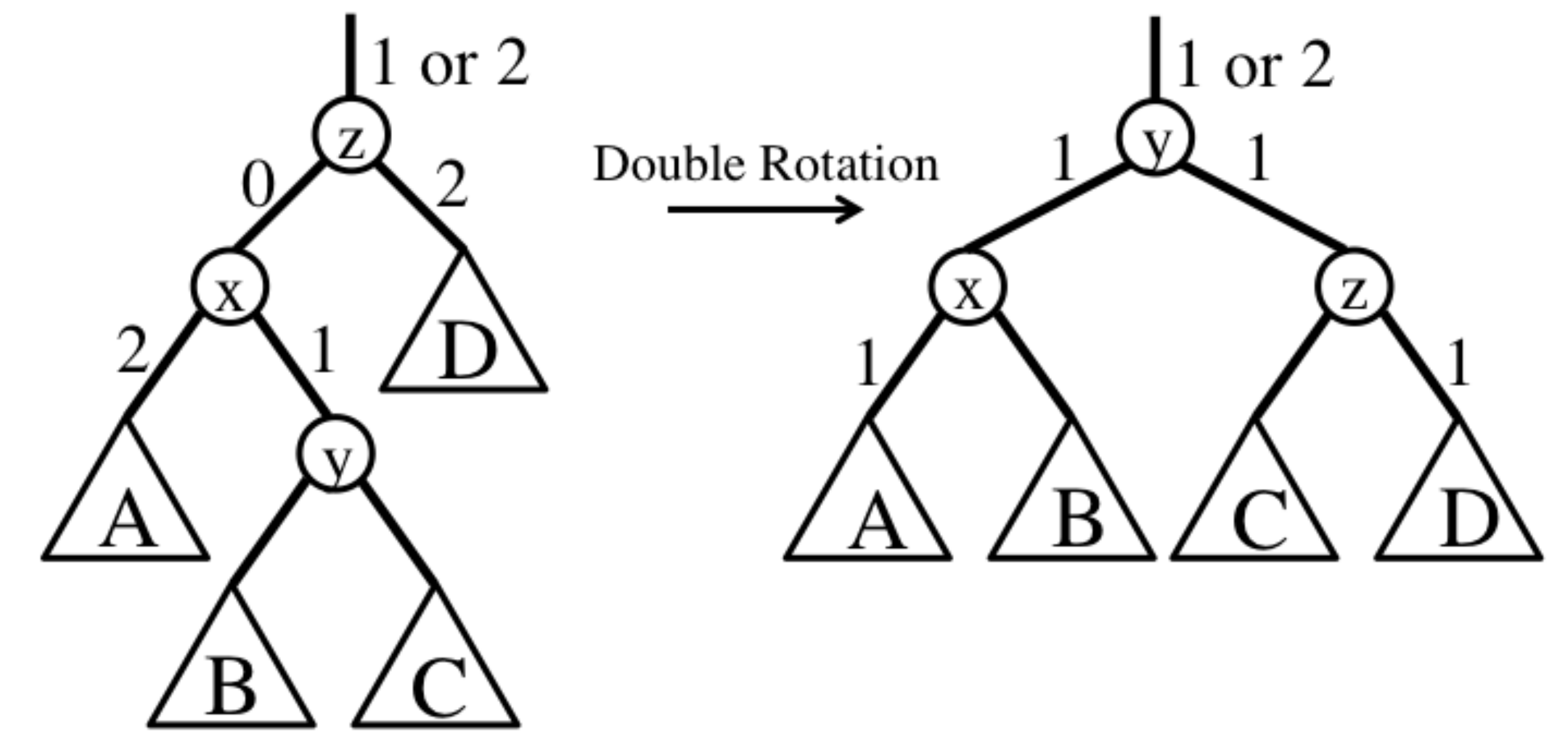}
\caption{Double rotation to rebalance after insertion}
\label{ins3}
\end{subfigure}

\caption{Rebalancing cases after insertion. Numbers next to edges are rank differences. Rank differences of unmarked edges do not change. The promote step may repeat. All cases have mirror images.}
\label{ins}
\end{figure}

A promotion (Figure \ref{ins1}) increases the rank of a node ($x$ in Figure \ref{ins1}) by 1. We call the node whose rank increases the {\it promoted node}. Each promotion either creates a new violation at the parent of the promoted node or restores the rank rule and terminates rebalancing.  Each single or double rotation (Figures \ref{ins2} and \ref{ins3}, respectively) restores the rank rule and terminates rebalancing.

To delete a leaf in an AVL tree, replace it by a missing node; to delete a unary node, replace it by its only child (initially changing no ranks).\footnote{Our expensive examples only delete leaves.  To delete a binary node $x$, swap $x$ with its symmetric-order successor or predecessor and proceed as described in the text; the swap makes $x$ a leaf or unary node.}  Such a deletion can violate the rank rule by producing a 2,2 or 1,3 node.  In this case, rebalance the tree by applying the appropriate case in Figure \ref{del} until there is no violation.  Each application of a case in Figure \ref{del} either restores the rank rule or creates a new violation at the parent of the previously violating node.  Whereas each rotation case in insertion terminates rebalancing, the rotation cases in deletion can be non-terminating.

\begin{figure}[!hbtp]
\centering

\begin{subfigure}{\textwidth}
\centering
\includegraphics[width=5in]{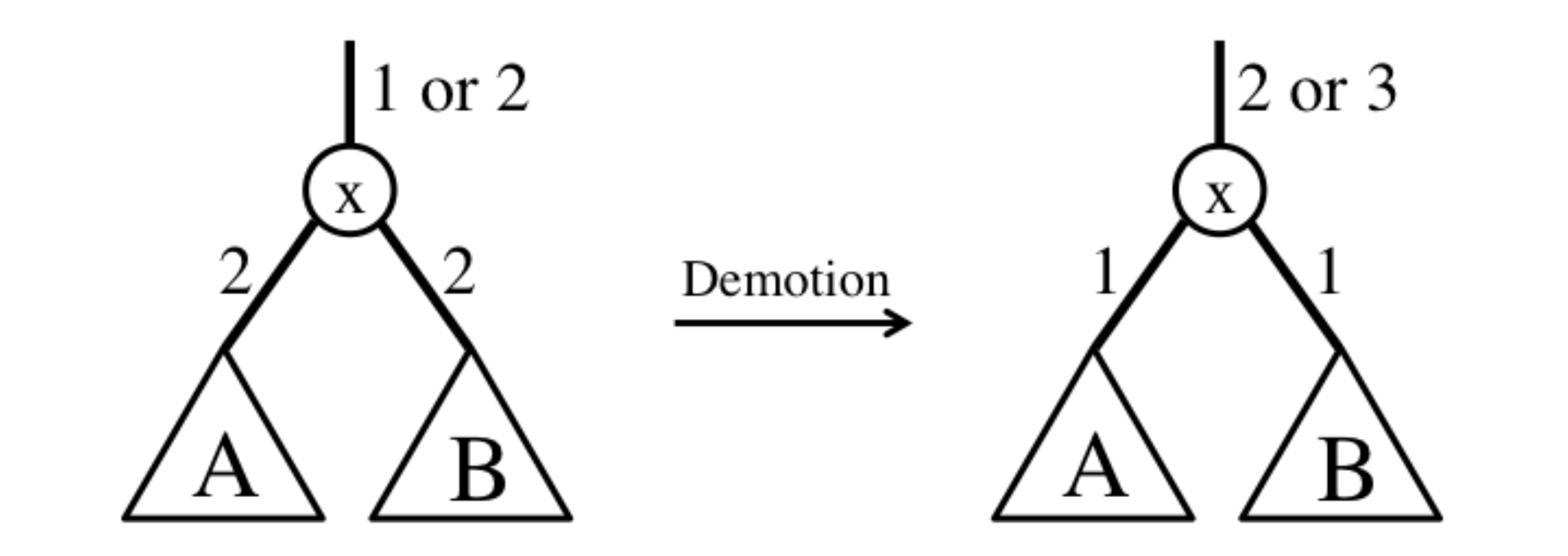}
\caption{Demotion to rebalance after deletion}
\label{del1}
\end{subfigure}

\begin{subfigure}{\textwidth}
\centering
\includegraphics[width=5in]{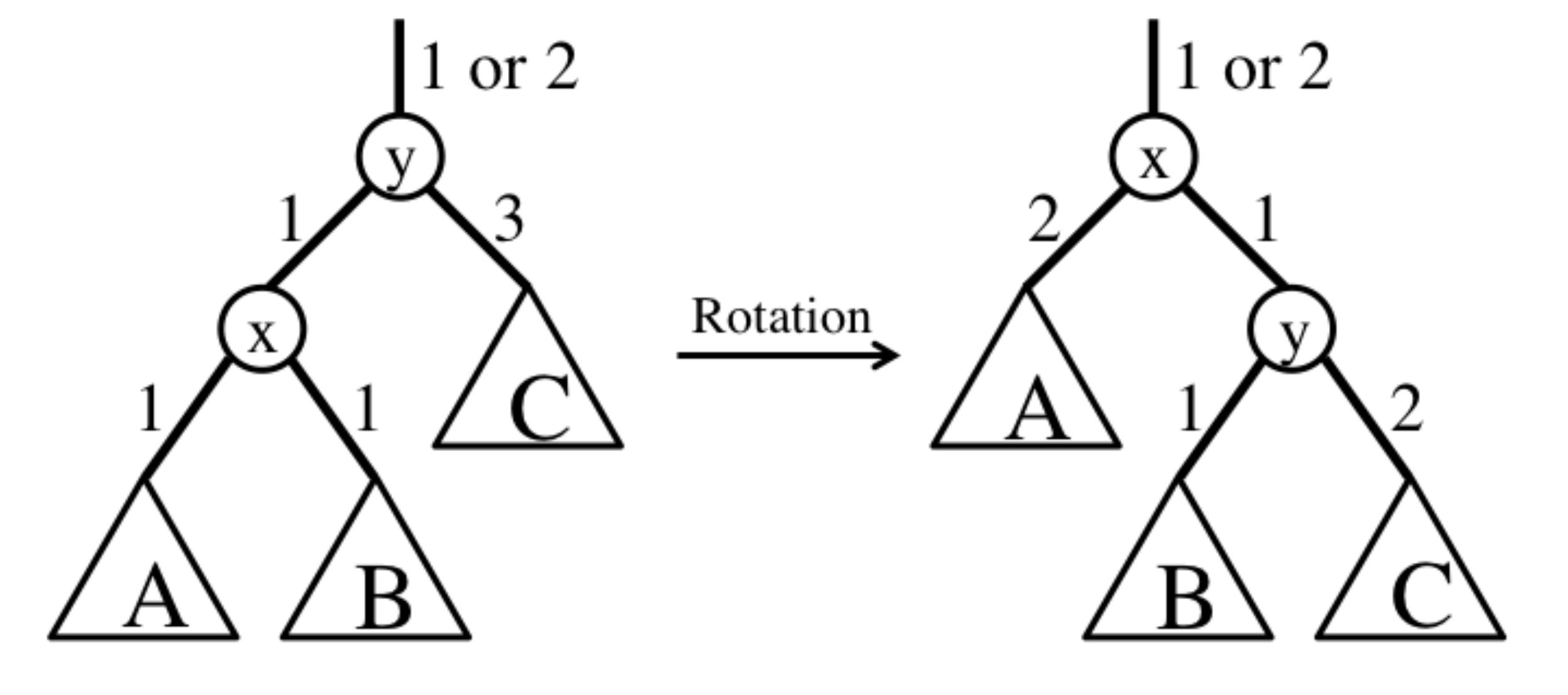}
\includegraphics[width=5in]{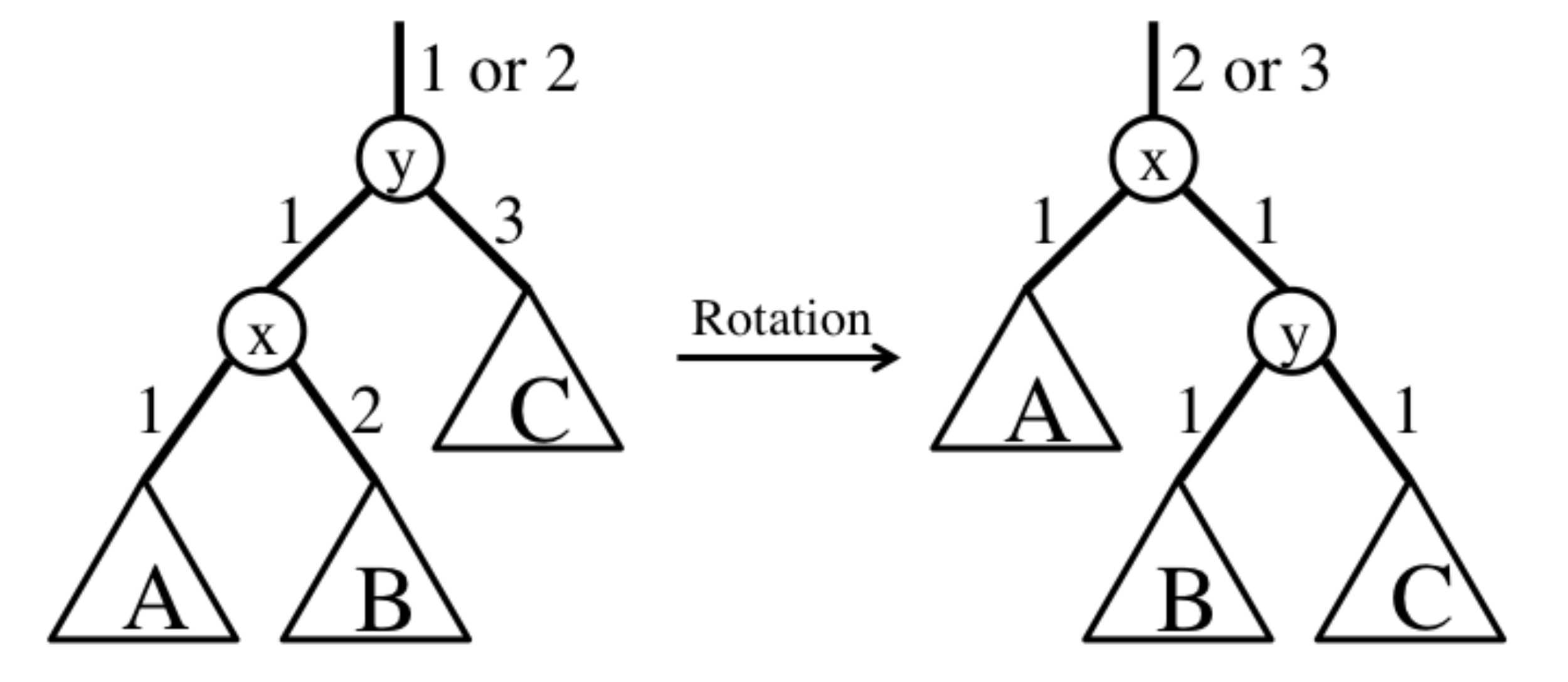}

\caption{Single rotation to rebalance after deletion}
\label{del2}
\end{subfigure}

\end{figure}
\clearpage
\begin{figure}
\ContinuedFloat
\begin{subfigure}{\textwidth}
\centering
\includegraphics[width=5in]{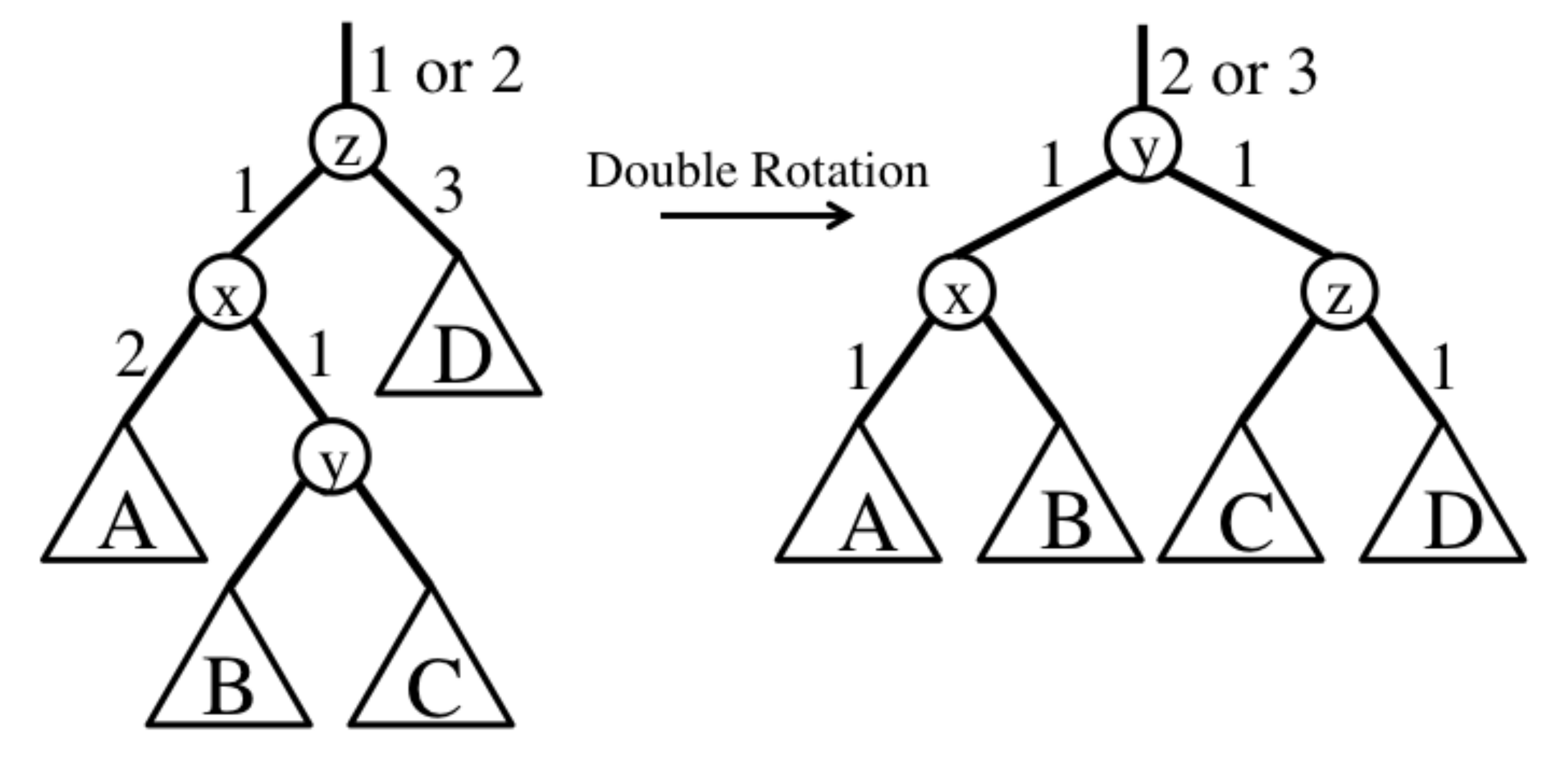}
\caption{Double rotation to rebalance after deletion}
\label{del3}
\end{subfigure}

\caption{Rebalancing cases after deletion. Numbers next to edges are rank differences. Rank differences of unmarked edges do not change. Each case except the first single rotation case may repeat. All cases have mirror images.}
\label{del}
\end{figure}

\section{Construction of AVL Trees}
In order to obtain an initial tree in our expensive set $E$, we must build it from an empty tree.  Thus the first step in our construction is to show that any $n$-node AVL tree can be built from an empty tree by doing $n$ insertions.  Although this result is easy to prove, we have not seen it in print before.\footnote{It also happens to be false for more relaxed types of balanced trees, such as weak AVL (wavl) trees \cite{haeupler}.  Not all $n$-node wavl trees can be built from an empty tree by doing insertions only; many require a number of intermixed insertions and deletions exponential in $n$.  This follows from an analysis using an exponential potential function like those in \cite{haeupler}.}

\begin{thm}\label{thm:build}
Any $n$-node AVL tree can be built from an empty tree by doing $T$ insertions, each of which does only promotions.
\end{thm}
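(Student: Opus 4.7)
The plan is to proceed by induction on $n$. The base case $n=0$ is the empty tree, requiring no insertions. For the inductive step, given an AVL tree $T$ with $n \geq 1$ nodes, I would locate a specific leaf $\ell$ of $T$ and an $(n-1)$-node AVL tree $T'$ such that the single insertion of $\ell$ into $T'$ produces $T$ using only promotions; applying the inductive hypothesis to $T'$ then yields a sequence of $n$ promotion-only insertions that builds $T$.

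The leaf $\ell$ would be found by descending from the root along $1$-child edges. This descent is always possible because every AVL node is $1,1$ or $1,2$ and hence has at least one $1$-child. The descent terminates at a leaf $\ell$, and by construction every proper ancestor of $\ell$ is a $1$-child of its parent; moreover, the ranks along the root-to-$\ell$ path decrease by exactly one at each step, so $p := \ell.p$ has rank $1$. Consequently $p$ is either (a) a $1,1$ binary node whose other child is a rank-$0$ leaf, or (b) a unary node with $\ell$ as its only child.

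In case (a) I would take $T' = T - \ell$ with all ranks unchanged, so that $p$ becomes a valid $1,2$ unary node; reinserting $\ell$ simply converts $p$ back into a $1,1$ binary, triggering no rebalancing at all. In case (b) I would obtain $T'$ by removing $\ell$, demoting $p$ from rank $1$ to $0$, and propagating demotions up the $1$-path: whenever the current demotion yields a $2,2$ violation at the next ancestor, demote that ancestor too, stopping either at the first ancestor that was $1,1$ in $T$ (which becomes $1,2$ after the demotion below it) or at the root. Because the walk consists entirely of $1$-child edges, each propagated demotion can only produce a $2,2$ violation at its parent, never a $1,3$ violation; hence the rotation cases of Figure~\ref{del} are never invoked, and $T'$ is a valid $(n-1)$-node AVL tree.

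I expect the main obstacle to be verifying that reinserting $\ell$ into $T'$ in case (b) uses only promotions and exactly reproduces $T$. The plan is to show that the promotion chain mirrors the demotion chain in reverse: attaching $\ell$ under $p$ creates a $0,1$ violation at $p$; promoting $p$ back to rank $1$ creates a $0,1$ violation at $p.p$ (the other child of $p.p$ having been the $2$-child of $p.p$ in $T$ and therefore sitting at rank-difference exactly $1$ from the demoted $p.p$ in $T'$); and so on up the chain, terminating at precisely the ancestor where the demotion halted. Confirming this requires a careful but routine rank-difference calculation along the $1$-path, using the facts that each intermediate chain-ancestor was $1,2$ in $T$ with path-child as its $1$-child and other child as a $2$-child, while the terminal chain-ancestor (if not the root) was $1,1$ in $T$. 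With this in hand, the inductive hypothesis applied to $T'$ yields the first $n-1$ insertions, and appending the insertion of $\ell$ completes the sequence.
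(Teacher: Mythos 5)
Your plan is correct, and it takes a genuinely different route from the paper. The paper inducts on the rank $k$ of $T$ via the \emph{truncation} $\underline{T}$ (delete \emph{all} leaves of $T$ and decrement every remaining rank), showing that the deleted leaves can be reinserted in a recursive order --- left subtree's leaves, then right subtree's --- with each insertion doing only promotions; the theorem then follows by stacking truncations. You instead induct on $n$, peeling off a single leaf $\ell$ found by descending $1$-child edges, and exhibiting the predecessor tree $T'$ explicitly via a demotion chain. Your key verifications are sound: every node on the descent path is a $1$-child, so each propagated demotion turns a $1,2$ parent (with the path-child as its $1$-child, hence other child a $2$-child) into a $2,2$ node and never into a $1,3$ node, so $T'$ is a valid AVL tree; and in $T'$ each intermediate chain node is $1,1$, so reinserting $\ell$ makes each such node $0,1$ in turn, which is exactly the promotion case of Figure~\ref{ins1} and never a rotation case, with the chain halting at the node that was $1,1$ in $T$ (or at the root). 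What the paper's truncation argument buys is that $\underline{T}$ is trivially an AVL tree and the recursive subtree structure makes the insertion order and the ``exactly one insertion raises the root's rank'' bookkeeping natural; what your argument buys is a more elementary, one-step-at-a-time induction that explicitly identifies a leaf that can always be inserted last, at the cost of the rank-difference case analysis along the demotion/promotion chain, which you have sketched correctly. The one item to make explicit in a full write-up is that $T'$ has the same shape as $T$ minus $\ell$ (only ranks differ), so the standard insertion of $\ell$'s key into $T'$ really does attach it at the vacated child position of $p$.
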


\begin{proof}
Let $T$ be a non-empty AVL tree.  The {\it truncation} $\underline{T}$ of $T$ is obtained by deleting all the leaves of $T$ and decreasing the rank of each remaining node by 1.  We prove by induction on the rank $k$ of $T$ that we can convert its truncation $\underline{T}$ into $T$ by inserting the leaves deleted from $T$ to form $\underline{T}$, in an order such that each insertion does only promotions.  The theorem then follows by induction on the height of the desired tree.

The empty tree can be converted into the one-node AVL tree by doing a single insertion.  Thus the result holds for $k = 0$.  Suppose $k > 0$ and the result holds for any rank less than $k$.  Let $T$ be an AVL tree of rank $k$.  Tree $T$ consists of a root $x$ and left and right subtrees  $L$ and $R$, both of which are AVL trees.  The truncation $\underline{T}$ of $T$ consists of root $x$, now of rank $k - 1$, and left and right subtrees $\underline{L}$ and $\underline{R}$.  Both $L$ and $R$ have rank $k - 1$ or $k - 2$, and at least one of them has rank $k - 1$.  Suppose $R$ has rank $k - 1$.  By the induction hypothesis, $\underline{L}$ can be converted into $L$ and $\underline{R}$ can be converted into $R$ by inserting leaves, each insertion doing only promotions.  Out of these insertions into either $\underline{L}$ or $\underline{R}$, exactly one of them will increase the rank of the root by 1.

In the left subtree of $\underline{T}$, do the sequence of insertions that converts $\underline{L}$ into $L$.  Then, in the right subtree of the resulting tree, do the sequence of insertions that converts $\underline{R}$ into $R$.  If $L$ has rank $k - 1$, then the insertion into $\underline{L}$ that increases the root rank by 1 will, when done in $\underline{T}$, also increase the root rank of $\underline{T}$ by 1, from $k - 1$ to $k$, increasing the rank difference of the right child of the root from 1 to 2 but having no other effect on the right subtree of the root.  Thus, after all the insertions into the left subtree, the tree consists of root $x$, now of rank $k$, left subtree $L$, and right subtree $\underline{R}$ of rank $k- 2$.  The subsequent insertions into the right subtree will convert it into $R$ without affecting the rest of the tree, producing $T$ as the final tree.

If on the other hand $L$ has rank $k- 2$, then the insertions into the left subtree of $\underline{T}$ will convert the left subtree into L, in the process increasing the rank of the root of the left subtree from $k- 3$ to $k- 2$ but having no effect on the root or the right subtree.  The subsequent insertions will convert the right subtree into $R$.  Among these insertions, the one that increases the rank of the root of the right subtree from $k - 2$ to $k - 1$ will also increase the rank of $x$ from $k- 1$ to $k$, thereby converting the root of the left subtree from a 1-child to a 2-child but having no other effect on the left subtree.  Thus the final tree is $T$.  The argument is symmetric if $R$ has rank $k - 2$. 
\end{proof}         

\section{Expensive AVL Trees}

Our expensive trees have even rank.\footnote{It is easy to define an analogous set of expensive trees of odd rank.}  We define the set $E$ of expensive trees recursively.  Set $E$ is the smallest set containing the one-node tree of rank 0 and such that if $A$, $B$, and $C$ are AVL trees of rank $k$ such that $A$ and $C$ are in $E$, then the two trees of rank $k + 2$ shown in Figure \ref{E} are in $E$.  The tree of type $L$ in Figure 4 contains a root $x$ of rank $k + 2$ and a left child $y$ of the root of rank $k + 1$, and has $A, B$, and $C$ as the left and right subtrees of $y$ and the right subtree of $x$, respectively.  The tree of type $R$ in Figure \ref{E} is similar except that $x$ is the right child of $y$ and $A, B,$ and $C$ are the left subtree of $y$ and the left and right subtrees of $x$, respectively.

\begin{figure}
\centering
\includegraphics[width=5in]{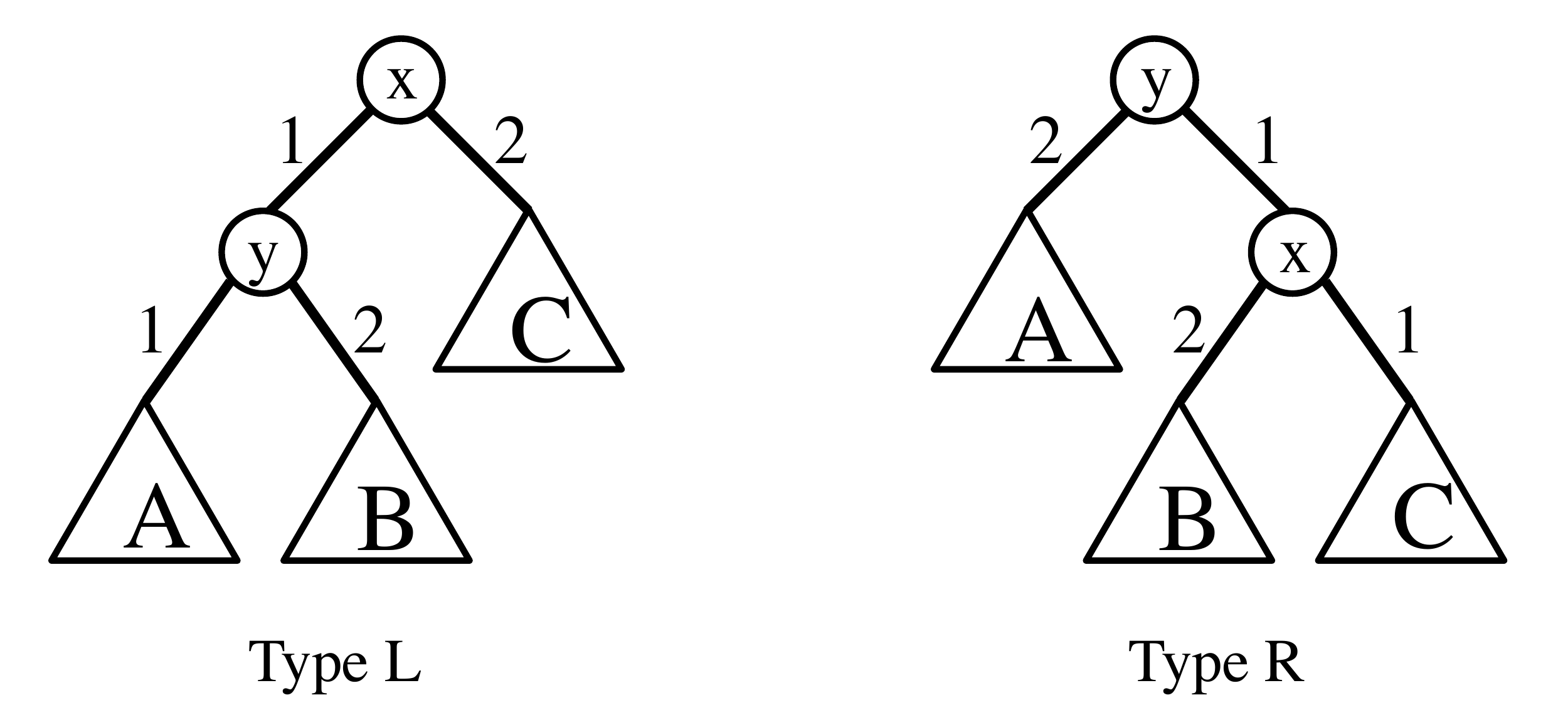}

\caption{Recursive definition of $E$. Numbers on edges are rank differences. The two trees shown are in $E$ if $A$ and $C$ are in $E$, $B$ is an AVL tree, and $A$, $B$, and $C$ have the same rank.}
\label{E}
\end{figure}

If $T$ is a tree in $E$, its shallow leaf is the leaf $z$ such that all nodes on the path from $z$ to the root, except the root itself, are 2-children.  A straightforward proof by induction shows that the shallow leaf exists and is unique.

\begin{thm}\label{thm:cost}
If $T$ is a tree in $E$ of rank k, deletion of its shallow leaf takes k/2 single rotations and produces a tree of rank $k - 1$. Reinsertion of the deleted leaf takes $k$ promotions and produces a tree of rank $k$ that is in $E$.
\end{thm}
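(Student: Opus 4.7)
The plan is strong induction on the rank $k$, exploiting the recursive definition of $E$. The base case $k = 0$ is immediate: $T$ is the one-node tree, which is its own shallow leaf; deleting it leaves the empty tree with $0 = k/2$ rotations, and reinserting it restores the one-node tree with $0 = k$ promotions.

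For the inductive step with $k \geq 2$, assume the statement for all smaller even ranks and let $T \in E$ have rank $k$. Without loss of generality $T$ is of type L, so it has root $x$ of rank $k$, left child $y$ of rank $k-1$ whose children $A \in E$ and $B$ each have rank $k-2$, and right subtree $C \in E$ of rank $k-2$. First I would argue that the shallow leaf of $T$ coincides with the shallow leaf of $C$: the root's unique 2-child is $C$, so the shallow-leaf path must enter $C$, and within $C$ it follows exactly the 2-child path that defines $C$'s own shallow leaf. By the inductive hypothesis applied to $C$, deleting that leaf costs $(k-2)/2$ single rotations, all confined to $C$ (so $y$, $A$, $B$ are unaffected), and produces a replacement subtree $C'$ of rank $k-3$. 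Now $x$ is a 1,3 node with 1,1 sibling $y$; applying the corresponding case of Figure \ref{del2} at $x$ uses one more single rotation and completes the rebalancing, bringing the total to $(k-2)/2 + 1 = k/2$ rotations and producing a tree of rank $k-1$ as claimed.

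For the reinsertion half, I would reinsert the leaf into its symmetric-order position, where its key now falls in the post-deletion tree. The key observation is that, because every proper ancestor of the shallow leaf in the original $T$ was a 2-child, the post-deletion tree has a specific structure in which the new leaf attaches where its parent was previously a leaf, and each successive ancestor along the path to the root is forced into a 0,1 configuration after the preceding promotion. Following the chain of $k$ promotions up to the root restores the rank to $k$, and I would then verify using Figure \ref{E} that the resulting rank-$k$ tree matches the recursive definition of $E$, possibly with the opposite type (L vs R) from the one we started with. The main obstacle I expect is this final structural verification: since an expensive deletion--insertion pair can flip the type at each recursive level (consistent with the introduction's remark that $2^{k/2}$ pairs are required to revisit the original tree), the induction likely needs to be strengthened to carry the type information at every level and to pinpoint where the new shallow leaf sits, so that the next inductive call sees a tree of the same form as $T$.
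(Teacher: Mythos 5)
Your overall strategy matches the paper's: induct on the rank, recurse into the 2-child subtree $C$, add one rotation at the root of $T$ on the way out of the deletion, and undo everything with promotions on reinsertion. But there are two genuine gaps. The first is in the step you state most confidently. You say $y$ is a 1,1 node and that one more single rotation at $x$ completes the rebalancing and ``produc[es] a tree of rank $k-1$.'' Those two claims are incompatible: when the 1-child sibling of a 3-child is 1,1, the applicable case of Figure \ref{del2} is the \emph{terminating} one, which promotes $y$, demotes $x$, and leaves the subtree's rank unchanged. If that were the case that applied, the resulting tree would have rank $k$, the rebalancing would never propagate out of any level of the recursion, the whole deletion would cost $O(1)$ rotations rather than $k/2$, and your own inductive hypothesis that $C'$ has rank $k-3$ could never be established. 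What the paper's argument actually relies on is that $B$ is the \emph{2-child} of $y$ (rank $k-3$), so that once $C$ has collapsed to rank $k-3$ the node $x$ is 1,3 with a 1,2 sibling whose 1-child is on the outside; that is the propagating single-rotation case, it does drop the rank to $k-1$, and it produces the very specific tree $T'$ the paper exhibits: root $y$ is 1,1 and its right child $x$ is 1,1 with subtrees $B$ and $C'$, both of rank $k-3$. You never pin down $T'$, and without it the second half of the argument cannot be run.

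The second gap is that the reinsertion half is a plan rather than a proof. The paper applies the induction hypothesis a second time --- reinsertion into $C'$ takes $k-2$ promotions and yields $C''\in E$ of rank $k-2$ --- and then checks that this makes $x$ a 0,1 node, forcing exactly two further promotions (at $x$, then at $y$) and leaving a tree that matches the type-$R$ template of Figure \ref{E} with subtrees $A$ and $C''$. You instead argue informally about a chain of 0,1 configurations and explicitly defer the verification that the result lies in $E$; that verification is the content of the theorem, so it cannot be deferred. Note also that your proposed remedy --- strengthening the induction to carry type information at every level --- is unnecessary: membership in $E$ requires only that $A$ and $C''$ be rank-$(k-2)$ members of $E$, irrespective of their types, so the theorem's own statement is already an adequate induction hypothesis. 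The L-to-R type flip matters only for the remark about needing $2^{k/2}$ deletion--insertion pairs to reproduce the original tree, not for the theorem itself.
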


\begin{proof}
We prove the theorem by induction on $k$.  In the one-node tree of rank 0, the shallow leaf is the only node.  Its deletion takes no rotations and produces the empty tree; its reinsertion takes no promotions and reproduces the original tree.  For $k = 2$, there is exactly one tree in $E$ of type $L$ and one of type $R$.  As shown in Figure \ref{base}, rebalancing after deletion of the shallow leaf in the type-$L$ tree takes one rotation and produces a tree of rank 1, and reinsertion takes two promotions and produces the type-$R$ tree.  Symmetrically, deletion of the shallow leaf in the type-$R$ tree takes one rotation and produces a tree of rank 1, and reinsertion takes one promotion and produces the type-$L$ tree.

\begin{figure}
\centering
\includegraphics[width=5.5in]{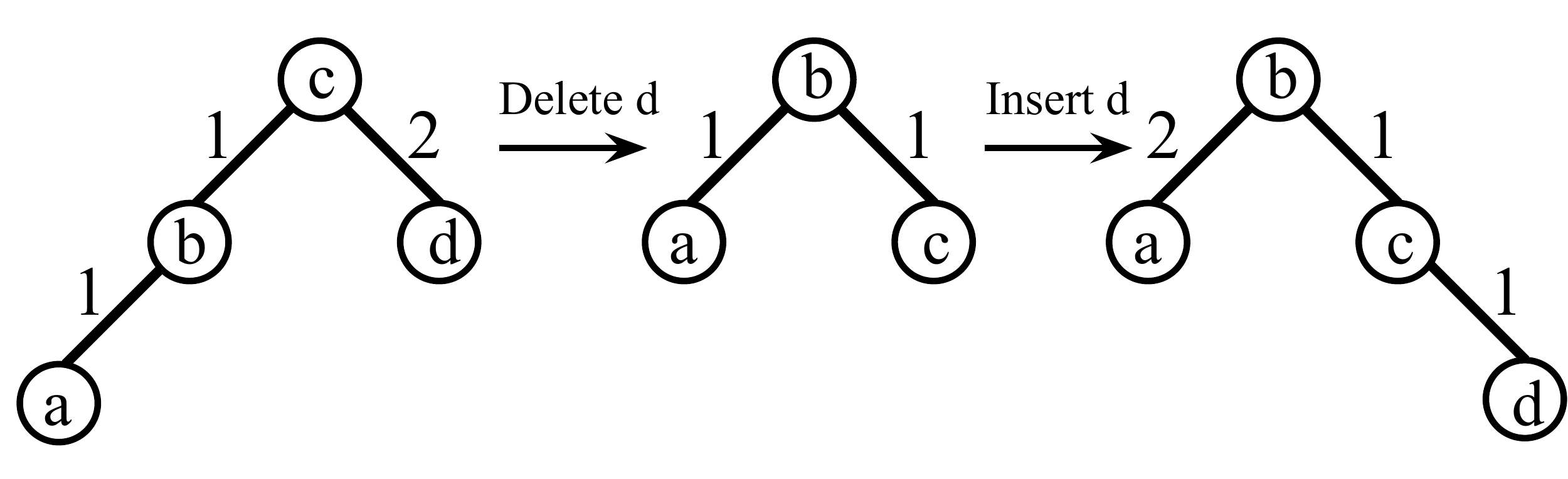}

\caption{Deletion and insertion of the shallow leaf in a type-$L$ tree of rank 2.}
\label{base}
\end{figure}

Suppose the theorem is true for $k$.  Let $T$ be a tree of rank $k$ + 2 and type $L$ in $E$.  (The argument is symmetric for a tree of type R.)  Let $x$ be the root, $y$ the left child of $x$, and $A$, $B$, and $C$ the left and right subtrees of $y$ and the right subtree of $x$, respectively (See the first tree in Figure \ref{ind}).  The shallow leaf of $C$ is the shallow leaf of $T$.  By the induction hypothesis, its deletion in $C$ does $k/2$ rotations and converts $C$ into a tree $C'$ of rank $k - 1$.  In $T$, deletion of the shallow leaf converts the right subtree of $x$ into $C'$, making the root of $C'$ a 3-child (See the second tree in Figure \ref{ind}). This causes one more single rotation, for a total of $k/2 + 1$, and produces the tree $T'$ (shown as the third tree in Figure \ref{ind}), of rank $k + 1$, with 1,1 root $y$ whose right child $x$ is also 1,1.  By the induction hypothesis, reinsertion of the deleted leaf into $C'$ does $k$ promotions and converts $C'$ into a tree $C''$ in $E$ of rank $k$.  In $T'$, the same reinsertion converts the right subtree of $T'$ into $C''$, making $x$ 0,1.  This causes $x$ and then $y$ to be promoted, for a total of $k + 2$ promotions, and produces the tree $T''$ in Figure \ref{ind}, which is a tree in $E$ of type $R$.
\end{proof}
\clearpage

\begin{figure}
\centering
\includegraphics[width=5in]{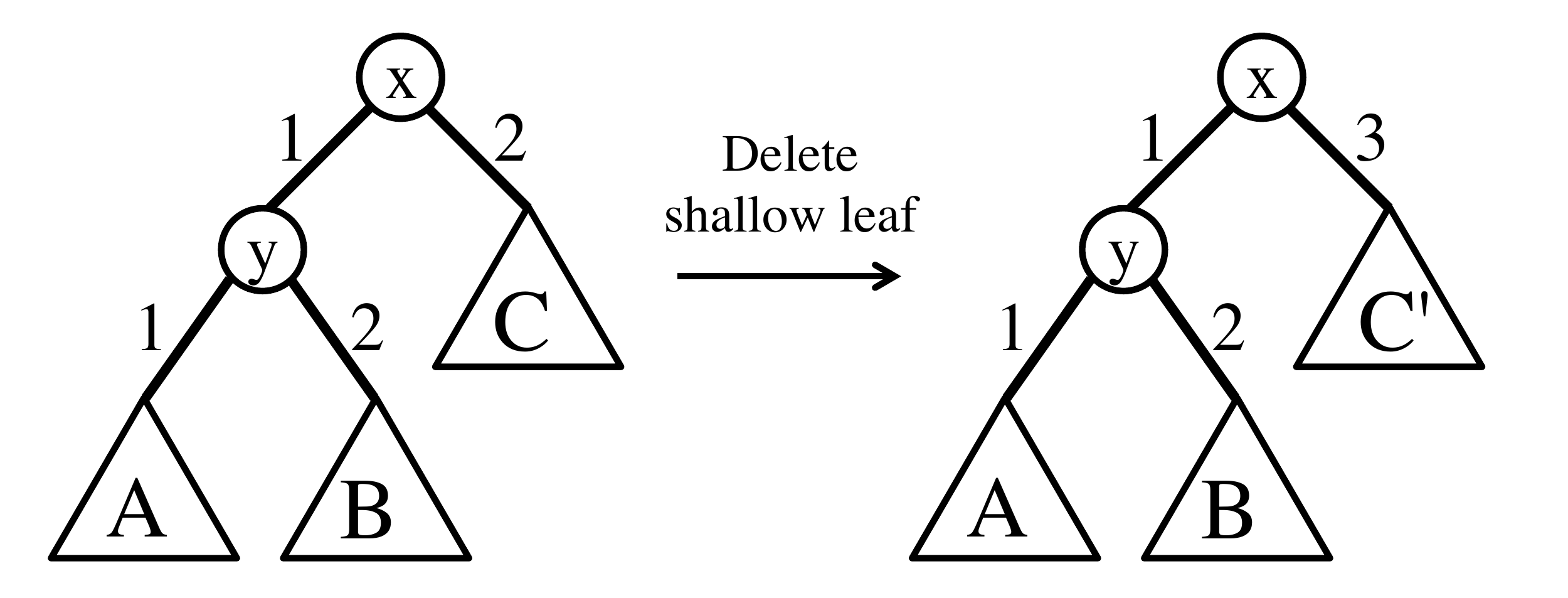}
\includegraphics[width=5in]{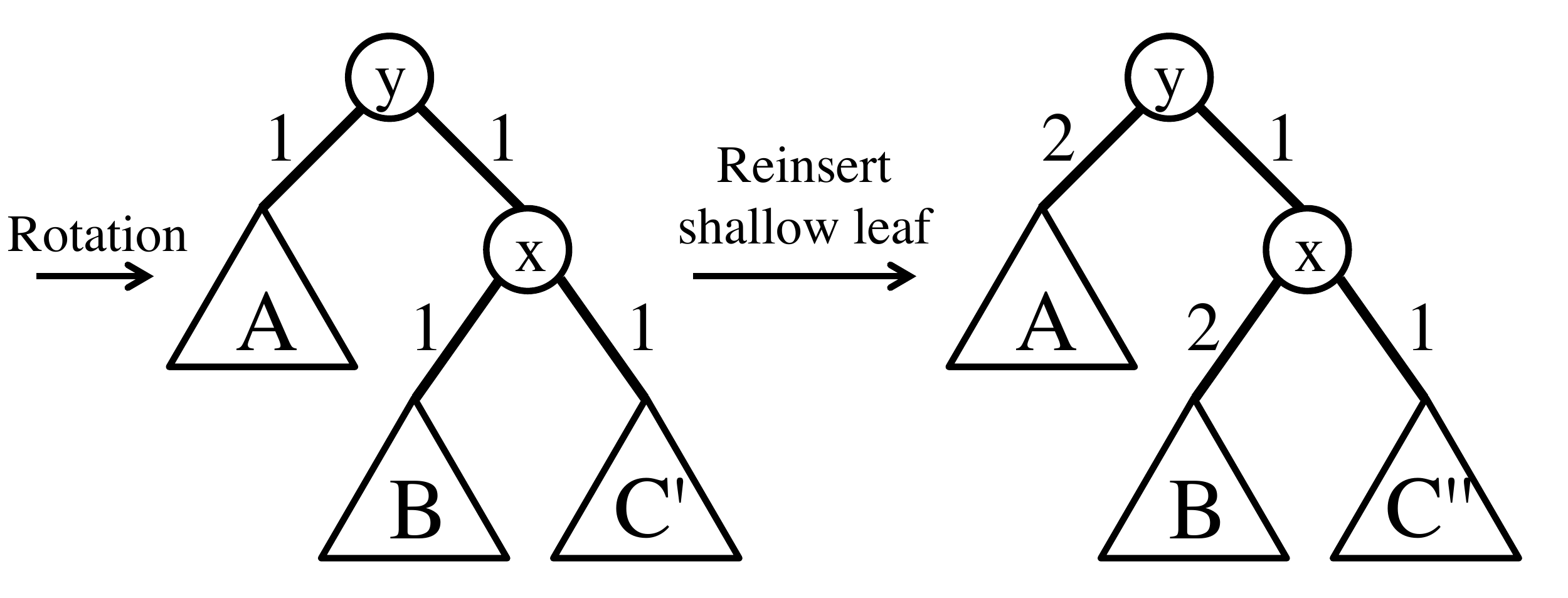}

\caption{Deletion and insertion of the shallow leaf in a type-$L$ tree of rank $k+2$}
\label{ind}
\end{figure}

\begin{rem}
The proof of Theorem \ref{thm:cost} implies that if one starts with a tree $T$ in $E$ of even rank $k$ and does $2^{k/2}$ deletion-reinsertion pairs, the final tree will be $T$.
\end{rem}

\begin{cor}
For infinitely many $n$, there is a sequence of $3n$ intermixed insertions and deletions on an initially empty AVL tree that takes $\Theta(n \log n)$ rotations.
\end{cor}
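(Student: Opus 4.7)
The plan is to combine Theorems \ref{thm:build} and \ref{thm:cost} into an explicit operation sequence. For each even integer $k$, I would fix any tree $T_k \in E$ of rank $k$ and let $n = n_k$ denote its number of nodes. Since $T_k$ is an AVL tree of height $k$, $n_k$ lies between the Fibonacci-type lower bound $\Theta(\phi^k)$ on the minimum size of a height-$k$ AVL tree and the perfectly-balanced upper bound $2^{k+1}-1$, so $k = \Theta(\log n)$. As $k$ ranges over even integers, $n_k$ takes infinitely many distinct values, supplying the infinitely many $n$ required by the corollary.

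Given such a $T_k$, I would construct the sequence in two phases. Phase 1 builds $T_k$ from the empty tree using $n$ insertions; by Theorem \ref{thm:build}, each of these insertions does only promotions and therefore contributes zero rotations. Phase 2 performs $n$ consecutive deletion-reinsertion pairs on the shallow leaf of the current tree. By Theorem \ref{thm:cost}, each pair starts with a tree in $E$ of rank $k$, the deletion costs exactly $k/2$ single rotations, and the reinsertion does only promotions and produces another tree in $E$ of rank $k$. This last closure property is essential, since it is what allows the same analysis to apply to the next pair and the expensive cost to be paid $n$ times rather than once.

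Summing the two phases, the sequence contains $n + 2n = 3n$ intermixed insertions and deletions, and the total number of rotations is $n \cdot (k/2) = \Theta(n \log n)$, which is the claim. The substantive content—that a single deletion on a tree in $E$ costs $\Theta(k)$ rotations and that reinsertion reliably returns the tree to $E$—is already packaged inside Theorem \ref{thm:cost}, so the real obstacle lies in the construction and analysis of the family $E$ carried out above, and the proof of the corollary itself is just bookkeeping on top of that.
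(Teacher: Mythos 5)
Your proof is correct and follows the same route as the paper: build a tree of $E$ with $n$ insertions via Theorem \ref{thm:build}, then perform $n$ delete/reinsert pairs on the shallow leaf, using the closure of $E$ under this operation from Theorem \ref{thm:cost} to charge $\Theta(\log n)$ rotations to every deletion. Your added details (the Fibonacci height bound and the explicit appeal to the closure property) are correct elaborations of what the paper leaves implicit.
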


\begin{proof}
Let $T$ be any tree in $E$.  If $T$ has $n$ nodes, its height is $\Theta(\log n)$ since it is an AVL tree \cite{adelson}.  Apply Theorem \ref{thm:build} to build $T$ in $n$ insertions.  Then repeat the following pair of operations $n$ times: delete the shallow leaf; reinsert the deleted leaf.  By Theorem \ref{thm:cost}, the total number of rotations will be $\Theta(n \log n)$.
\end{proof}

\section*{Acknowledgments}
The third author thanks Uri Zwick for requesting verification of the claim in \cite{haeupler} that deletions in AVL trees have high amortized rotation cost, providing motivation to write this note.

The first author is partially supported by the Italian Ministry of Education, University, and Research (MIUR) under PRIN 2012C4E3KT national research project AMANDA --- Algorithmics for Massive and Networked Data.

\bibliographystyle{plain}
\bibliography{AVLbib}



\end{document}